\DeclareMathOperator{\Id}{\mathbf{I}}
\DeclareMathOperator{\Sat}{\mathbf{Sat}}
\DeclareMathOperator{\Pfx}{\Phi_{f_\pi}}
\theoremstyle{plain}
\newtheorem{thm}{\textbf{Theorem}}
\newtheorem{lem}{\textbf{Lemma}}
\newtheorem{prop}{\textbf{Proposition}}
\theoremstyle{definition}
\newtheorem{defn}{\textbf{Definition}}
\newtheorem{ass}{\textbf{Assumption}}
\theoremstyle{remark}
\newtheorem{rem}{\textbf{Remark}}
\begin{document}

\title{\LARGE \bf
Backup Control Barrier Functions: Formulation and Comparative Study
}
\author{Yuxiao Chen$^{1}$, Mrdjan Jankovic$^{2}$, Mario Santillo$^{2}$,
and Aaron D. Ames$^{1}$
\thanks{$^{1}$ Yuxiao Chen and Aaron D. Ames are with Department of Mechanical and Civil Engineering, California Institute of Technology, Pasadena, CA, USA {\tt\small chenyx,ames@caltech.edu}}
\thanks{$^{2}$ Mario Santillo and Mrdjan Jankovic are with Ford Research and Advanced Engineering, Dearborn, MI, USA {\tt\small msantil3,mjankov1@ford.com}}
}

\maketitle
\thispagestyle{empty}

\begin{abstract}
The backup control barrier function (CBF) was recently proposed as a tractable formulation that guarantees the feasibility of the CBF quadratic programming (QP) via an implicitly defined control invariant set. The control invariant set is based on a fixed backup policy and evaluated online by forward integrating the dynamics under the backup policy. This paper is intended as a tutorial of the backup CBF approach and a comparative study to some benchmarks. First, the backup CBF approach is presented step by step with the underlying math explained in detail. Second, we prove that the backup CBF always has a relative degree 1 under mild assumptions. Third, the backup CBF approach is compared with benchmarks such as Hamilton Jacobi PDE and Sum-of-Squares on the computation of control invariant sets, which shows that one can obtain a control invariant set close to the maximum control invariant set under a good backup policy for many practical problems.
\end{abstract}

\section{Introduction}\label{sec:intro}

Control barrier functions (CBF) \cite{ames2014control,wieland2007constructive} were proposed as a method that enforces constraints on dynamic systems, which typically works as a supervisory controller on top of a legacy controller. To guarantee the satisfaction of the safety constraints, a CBF quadratic program (QP) is solved online. While CBF is getting increasingly popular due to its simple implementation and strong guarantee, the construction of CBF is sometimes overlooked.

To guarantee that the CBF QP is always feasible, a control invariant set is needed, which is defined as a set in which any trajectory of the dynamic system can stay indefinitely. The concept of control invariant sets has been studied under various background and names, such as viability kernel \cite{aubin2011viability}, infinite time reachable set \cite{bertsekas1972infinite}, and various methods have been proposed to compute the control invariant set depending on the system dynamics, see \cite{blanchini1999set} for an overview. Unfortunately, the computation of control invariant sets is notoriously difficult. Even for simple cases such as linear or polynomial dynamics, computation tools such as Minkowski operations \cite{rakovic2004computation}, robust linear program \cite{chen2018data}, and Sum-of-Squares \cite{korda2014convex} do not scale well. For general nonlinear dynamic systems, the standard tool for computing invariant sets is Hamilton Jacobi PDE \cite{mitchell2005time}, which typically cannot scale beyond systems with state dimensions 4 due to the exponential complexity.

Due to the difficulty of synthesizing proper control barrier functions based on control invariant sets, CBF QP has been implemented without control invariant sets. One simple treatment is to assume infinite actuation power \cite{taylor2020learning}, in which case the feasibility of the CBF QP can be guaranteed when a simple sign condition is satisfied. To be specific, if one can show that when the Lie derivative of the CBF w.r.t. the input dynamics is zero, the CBF condition is satisfied by the intrinsic dynamics, the CBF QP is always feasible under infinite actuation. Another simplification is to assume that the velocity instead of the acceleration is under control \cite{lindemann2018control}, which can guarantee the feasibility of the CBF QP for constraints on the position since the CBF QP can always pick a velocity pointing away from the constraint. In essence, assuming direct velocity control is similar to assuming infinite actuation as an instantaneous change of velocity requires infinite force. Obviously, these assumptions are not true in practice, and the safety of the system is subject to parameter tuning. The simplified approaches might work for low-speed cases as the change of velocity is not severe, but will not work in general for highly dynamic applications.

Another issue of CBF without a control invariant set is the relative degree. For example, a typical vehicle/robot model with acceleration input is a second (or higher) order model with the position states and velocity states. If one directly uses the constraint on the position as a control barrier function, the relative degree of the CBF is 2, and its Lie derivative does not contain the acceleration input. Several solutions have been proposed for the high relative degree, such as Input-Output Linearization \cite{nguyen2016exponential,xu2018constrained} and backstepping type formulations \cite{xiao2019control}. Again, the CBF performance is subject to parameter tuning.

To the best of our knowledge, under limited input, there is not a generic method that guarantees the feasibility of the CBF QP without a control invariant set. All of the above-mentioned methods rely on heuristics and tuning to work in practice. Even when the CBF works well in the test cases, there is no guarantee of performance for cases not included in the test.

The idea of control barrier functions based on backup controllers (referred to as backup CBF for the remainder of this paper) \cite{gurriet2020scalable} was proposed based on a simple observation that extends a small control invariant set, which is typically easier to obtain, to a larger control invariant set by fixing a backup controller. The backup CBF guarantees the feasibility of the CBF QP and circumvents the difficult computation of a control invariant set by implicitly representing the control invariant set. However, the implicit representation calls for online integration of the dynamics, and the resulting control invariant set is typically suboptimal in the sense that it is not the maximum control invariant set. This paper is intended as a tutorial to the backup CBF approach by demonstrating the method step by step on some simple examples. We also provide explanations and visualizations on the advantage and disadvantages of the backup CBF over some benchmark formulations.

\section{Preliminaries and a motivating example}\label{sec:prelim}
\subsection{Control barrier functions}
We begin by a brief review of control barrier functions. Consider the following control affine system:
\begin{equation}\label{eq:dyn}
  \dot{x}=f(x)+g(x)u,\quad x\in\mathbb{R}^n,u\in\mathcal{U}\subseteq \mathbb{R}^m,
\end{equation}
where $x$ is the system state and $u$ is the input. Suppose there exists a function $h:\mathbb{R}^n\to\mathbb{R}$ that satisfies
\begin{equation}\label{eq:CBF}
  \begin{aligned}
&\forall~ x \in {\mathcal{X}_0},&h(x) \ge 0\\
&\forall~ x \in {\mathcal{X}_d},&h(x) < 0\\
&\forall~ x \in \left\{ {x\mid h(x) \ge 0} \right\}, &\exists~ u \in \mathcal{U}\;
\mathrm{s.t.}~\dot h + \alpha \left( h \right) \ge 0,
\end{aligned}
\end{equation}
where $\mathcal{X}_0$ is the set of initial states and $\mathcal{X}_d$ is the danger set that we want to keep the state away from. $\alpha(\cdot)$ is a class-$\mathcal{K}$ function, i.e., $\alpha(\cdot)$ is strictly increasing and satisfies $\alpha(0)=0$. Then $h$ is called control barrier function, and for any legacy controller, the CBF controller is a supervisory controller that enforces the state to stay inside $\left\{ {x\mid h(x) \ge 0} \right\}$ with the following quadratic programming:
\begin{equation}\label{eq:CBF_QP}
  \begin{aligned}
u^\star = &\mathop {\arg \min }\limits_{u \in \mathcal{U}} \left\| {u - {u^0}} \right\|^2\\
\mathrm{s.t.}~&\nabla h\cdot f\left( {x,u} \right) + \alpha \left( h \right) \ge 0,
\end{aligned}
\end{equation}
where $u^0$ is the input of the legacy controller.

The third condition in \eqref{eq:CBF} ensures that \eqref{eq:CBF_QP} is always feasible, yet it is difficult to find an $h$ that satisfies it. For clarity, we refer to a function that satisfies the first two conditions in \eqref{eq:CBF} a CBF candidate, a function that satisfies all three conditions a valid CBF.

The CBF condition is closely tied to the concept of a control invariant set, which is defined as follows.
\begin{defn}\label{def:inv}
  A set $\mathcal{S}$ is a control invariant set if there exists a control law $\pi:\mathbb{R}^n\to\mathcal{U}$ such that for all initial condition $x(0)\in\mathcal{S}$, $\forall t\ge0, x(t)\in\mathcal{S}$.
\end{defn}
It is straightforward to see that for a valid CBF, $\{x|h(x)\ge0\}$ is a control invariant set. On the other direction, suppose for a CBF candidate $h$, its 0-level set is a control invariant set. Immediately from Definition \ref{def:inv},
\begin{equation*}
  h(x)=0\to \exists u\in\mathcal{U}~\mathrm{s.t.} \dot h(x,u)\ge 0,
\end{equation*}
otherwise the state will exit $\mathcal{S}$, which contradicts Definition \ref{def:inv}. Then $h$ can be shown to be a valid control barrier function by picking an $\alpha$ large enough given some continuity condition (Lipschitz continuity of $h$ and $\dot{h}$). As a result, the third condition in \eqref{eq:CBF} is also referred to as the set invariance condition.

\subsection{Double integrator example}
As a motivating example, consider a simple double integrator with
\begin{equation*}
  \dot{x}=\begin{bmatrix}
    \dot{s} \\
    \dot{v}
  \end{bmatrix}=\begin{bmatrix}
                  v \\
                  u
                \end{bmatrix},\quad u\in[-u_{\max},u_{\max}],v\in[-v_{\max},v_{\max}].
\end{equation*}
The safety constraint is $X\le C$ with a constant $C$. Suppose one directly takes the safety constraint as a CBF candidate: $h_0(x)=C-s$. There are two issues. First, $h_0$ has relative degree 2, i.e., $\dot{h}_0$ is not a function of $u$. Second, the set $\{x|h_0(x)\ge 0\}$ is not control invariant. Consider the case where $s=C,v>0$, the limited input cannot stop the state from crossing into the danger set due to inertia.

For this simple system, it is widely known that a simple valid CBF exists:
\begin{equation}\label{eq:DI_h}
  h(x)=C-s-\mathds{1}_{v> 0}\frac{v^2}{2u_{\max}},
\end{equation}
which is the safety constraint combined with the minimum stopping distance. When $h(x)\ge 0$, one can always apply the maximum deceleration $u=-u_{\max}$ until $v=0$ and the safety constraint remains satisfied. Fig. \ref{fig:DI} shows the difference between $h$ and $h_0$ where $h_0$ includes a part of the state space from which the safety constraint will eventually be violated.

\begin{figure}
  \centering
  \includegraphics[width=0.7\columnwidth]{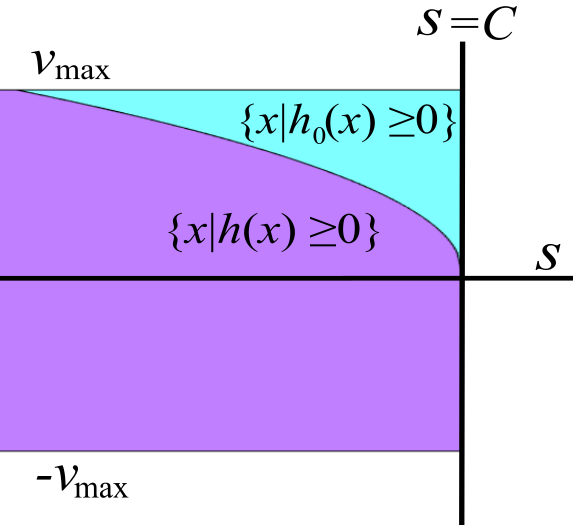}
  \caption{Difference between $h$ and $h_0$ for the double integrator example}\label{fig:DI}
\end{figure}

In fact, the above CBF is exactly the backup CBF with the backup policy being
\begin{equation*}
  \pi(x)=-\mathds{1}_{v> 0}u_{\max}.
\end{equation*}
We shall show later that the backup policy $\pi$ takes the system to a small control invariant set $\mathcal{S}_0=\{x|v=0\}$, and $\{x|h(x)\ge 0\}$ with $h$ in \eqref{eq:DI_h} is the set of initial conditions that can be brought to $\mathcal{S}_0$ while satisfying the safety constraint. In this view, $h$ is induced from the backup policy $\pi$.

As mentioned in the Introduction, computing a control invariant set is very difficult, especially for high-dimensional nonlinear systems. We shall show that for complicated dynamic systems, a valid CBF can be fairly easily obtained from a backup policy. Though the CBF induced from a backup policy might not have the nice closed form as in the double integrator case, the CBF QP can be solved online and is guaranteed to be feasible.
\subsection{Notation and preliminaries}
Before going into the detail, some important notations are reviewed. Given the dynamic system in \eqref{eq:dyn}, a control policy $\pi:\mathbb{R}^n\to\mathcal{U}$, the closed-loop dynamics under $\pi$ is $\dot{x}=f_\pi(x)=f(x)+g(x)\pi(x)$. We let $\Pfx:\mathbb{R}^n\times(-\infty,\infty)\to\mathbb{R}^n$ denote the flow map, i.e., $\Pfx(x_0,t)$ denotes the solution $x(t_0+t)$ to the Initial Value Problem (IVP) at time $t_0+t$ with $x(t_0)=x_0$ under the dynamics $f_\pi$. Note that since \eqref{eq:dyn} is time invariant, the initial time $t_0$ is irrelevant. Moreover, the flow map is additive in $t$:
\begin{equation}\label{eq:additive}
  \Pfx(\Pfx(x,t_1),t_2)=\Pfx(x,t_1+t_2).
\end{equation}


Since the flow map can be written as
\begin{equation}\label{eq:flow}
  \Pfx(x_0,t)=x_0+\int_{0}^{t} f_\pi(x(\tau))d\tau,
\end{equation}
it is differentiable w.r.t. both $t$ and $x$. By inspection, $\frac{\partial \Pfx(x_0,t)}{\partial t}=f_\pi(\Pfx(x_0,t))$. To obtain the partial derivative over $x_0$, take derivative on both sides:
\begin{equation*}
  \frac{\partial \Pfx(x_0,t)}{\partial x_0}=\Id+\int_{0}^{t} \frac{d f_\pi}{d x} \frac{\partial x(\tau)}{\partial x_0} d\tau,
\end{equation*}
from \cite{seywald2003desensitized}, we can define $Q(t)=\frac{\partial \Pfx(x_0,t)}{\partial x_0}$ as the sensitivity Jacobian, and taking derivative over $t$ on both sides yields
\begin{equation}\label{eq:ja}
  \dot{Q}(t)=\frac{d f_\pi(x(t))}{d x(t)} Q(t),
\end{equation}
where $x(t)=\Pfx(x_0,t)$. This fact will play a key role on the derivation of the CBF condition.

\begin{prop}\label{prop:deriv_inv}
For all $\tau\ge t$,
\begin{equation}\label{eq:deriv_inv}
\frac{d\Phi(x(t),\tau-t)}{dt}|_{u(t)=\pi(t)}=0.
\end{equation}
\end{prop}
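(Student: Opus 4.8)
The quickest argument is purely algebraic: when the applied input equals the backup policy, $u(t)=\pi(t)$, the realized trajectory satisfies $\dot x = f_\pi(x)$, hence $x(t)=\Pfx(x(0),t)$, and the additivity property \eqref{eq:additive} gives $\Pfx(x(t),\tau-t)=\Pfx\!\big(\Pfx(x(0),t),\tau-t\big)=\Pfx(x(0),\tau)$. The right-hand side carries no $t$-dependence, so its derivative in $t$ vanishes. I would record this as the proof proper, but I would also carry out the differentiation explicitly, because the intermediate identity it produces is exactly what gets reused later to show that $\Phi$ has relative degree one.

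For the explicit version, set $y(t)\coloneqq\Pfx(x(t),\tau-t)$ and apply the chain rule, noting that $t$ enters through \emph{both} arguments. Using $\partial\Pfx/\partial t = f_\pi(\Pfx(\cdot,\cdot))$ and $\partial\Pfx/\partial x_0 = Q$ from the preliminaries, together with $\dot x(t)=f_\pi(x(t))$,
\begin{equation*}
\dot y(t) = Q(\tau-t)\,f_\pi\big(x(t)\big)\;-\;f_\pi\big(\Pfx(x(t),\tau-t)\big),
\end{equation*}
where $Q(\cdot)$ is the sensitivity Jacobian \eqref{eq:ja} of the flow emanating from the current state $x(t)$. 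So the whole statement reduces to the identity $Q(\tau-t)\,f_\pi(x(t)) = f_\pi\big(\Pfx(x(t),\tau-t)\big)$: the sensitivity Jacobian pushes the closed-loop vector field at the base point forward to the vector field at the flowed point.

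To prove that identity I would differentiate the semigroup relation \eqref{eq:additive}, $\Pfx(\Pfx(x,s_1),s_2)=\Pfx(x,s_1+s_2)$, with respect to $s_1$ at fixed $s_2$: the left side yields $Q(s_2)\,f_\pi(\Pfx(x,s_1))$ and the right side yields $f_\pi(\Pfx(x,s_1+s_2))$, which after renaming is the desired equality. Equivalently, both $z_1(t)=Q(t)f_\pi(x_0)$ and $z_2(t)=f_\pi(\Pfx(x_0,t))$ solve the same linear variational ODE $\dot z=\frac{df_\pi}{dx}(x(t))z$ with the same value $f_\pi(x_0)$ at $t=0$, so uniqueness of solutions (guaranteed by the same Lipschitz regularity that makes \eqref{eq:flow} well posed) forces $z_1\equiv z_2$. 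The main thing to be careful about is the chain-rule bookkeeping — the $-1$ from differentiating $\tau-t$, and the fact that the relevant $Q$ is the sensitivity of the flow starting at $x(t)$, not at $x(0)$; once those are tracked correctly the two terms cancel identically. The hypothesis $\tau\ge t$ plays no role in the cancellation itself; it only fixes that we are looking at a forward flow, which is the regime of interest.
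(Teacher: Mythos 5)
Your proof is correct, but it takes a different route from the paper's. The paper proves the proposition by writing $\Pfx(x(t),\tau-t)=x(t)+\int_{t}^{\tau} f_\pi(x(\overline{\tau}))\,d\overline{\tau}$ and differentiating in $t$, obtaining $\dot{x}(t)-f_\pi(x(t))$, which vanishes once $u(t)=\pi(x(t))$; this is short but quietly ignores that the integrand trajectory $x(\overline{\tau})$ also depends on $t$ through the initial condition $x(t)$ (a dependence that happens to be harmless exactly because the input is $\pi$). Your chain-rule computation tracks that dependence explicitly: $\dot y(t)=Q(\tau-t)\,\dot x(t)-f_\pi(\Pfx(x(t),\tau-t))$, and you close the gap with the pushforward identity $Q(\tau-t)\,f_\pi(x(t))=f_\pi(\Pfx(x(t),\tau-t))$, proved either by differentiating the semigroup relation \eqref{eq:additive} or by uniqueness for the variational ODE \eqref{eq:ja}. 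This buys two things the paper's proof does not state: it yields the general formula $\dot y=Q(\tau-t)\bigl(\dot x(t)-f_\pi(x(t))\bigr)$ valid for arbitrary $u$, which is precisely the expression used later when forming the constraints in \eqref{eq:CBF_QP_backup}, and it makes the cancellation rigorous at the single instant $t$ where $u(t)=\pi(x(t))$. One small caveat: your ``purely algebraic'' argument, which you designate as the proof proper, needs $u\equiv\pi$ on an interval around $t$ (so that $x(\cdot)$ is itself a $\pi$-trajectory and $\Pfx(x(t),\tau-t)=\Pfx(x(0),\tau)$ is constant), whereas the proposition conditions only on the instantaneous input $u(t)=\pi(x(t))$; your explicit chain-rule version handles the instantaneous statement, so it, rather than the constancy argument, should carry the proof.
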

\begin{proof}
Note that the flow map can be written as
\begin{equation*}
  \Pfx(x(t),\tau-t)=x(t)+\int_{t}^{\tau} f_\pi(x(\overline{\tau}))d\overline{\tau},
\end{equation*}
taking the derivative over $t$ gives
\begin{equation*}
  \frac{d\Phi(x(t),\tau-t)}{dt} = \dot{x}(t)-f_\pi(x(t)),
\end{equation*}
and the result follows as $\dot{x}|_{u=\pi(x)}=f_\pi(x)$.

\end{proof}
The physical interpretation of Proposition \ref{prop:deriv_inv} is that the flow under the backup strategy $\pi$ would not change with time if the current state follows $\pi$.
\section{Backup CBF}\label{sec:backup}
This section presents the backup CBF step by step.
\subsection{Definition and CBF QP}
Consider a safe region $\mathcal{C}\doteq\{x|h^\mathcal{C}\ge 0\}\subseteq \mathbb{R}^n$, a control invariant set $\mathcal{S}_0\doteq\{x|h^\mathcal{S}\ge 0\}\subseteq \mathcal{C}$ with $h^\mathcal{C},h^\mathcal{S}:\mathbb{R}^n\to\mathbb{R}$ differentiable. Given a backup controller $\pi$, define the $T$-time constrained reachable set as
\begin{equation}\label{eq:reach}
\begin{aligned}
  \mathcal{S}&=\mathcal{R}(\mathcal{S}_0,\mathcal{C},f_\pi,T)\\
  &\doteq\{x\in\mathbb{R}^n|\Pfx(x,T)\in\mathcal{S}_0\wedge \forall t\in[0,T],\Pfx(x,t)\in\mathcal{C}\},
\end{aligned}
\end{equation}
which contains the states from which the flow under $f_\pi$ will stay in $\mathcal{C}$ and reach $\mathcal{S}_0$ by $T$. The theoretical root of the backup CBF approach is the following simple observation.
\begin{thm}\cite{gurriet2018online}\label{thm:inv}
   Given an initial control invariant set $\mathcal{S}_0$, a safe region $\mathcal{C}\subseteq \mathbb{R}^n$, and a backup controller $\pi$, for all $T>0$, the $T$-time constrained reachable set $\mathcal{S}$ defined in \eqref{eq:reach} is a control invariant set and $\mathcal{S}_0\subseteq\mathcal{S}\subseteq\mathcal{C}$.
\end{thm}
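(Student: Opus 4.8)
The plan is to verify directly from Definition~\ref{def:inv} that $\mathcal{S}$ is control invariant, using the backup controller $\pi$ itself as the witnessing control law on a suitable portion of the trajectory, and then handing off to the control law that certifies $\mathcal{S}_0$ once the flow reaches $\mathcal{S}_0$. The inclusion $\mathcal{S}\subseteq\mathcal{C}$ is immediate: if $x\in\mathcal{S}$ then by \eqref{eq:reach} with $t=0$ we have $\Pfx(x,0)=x\in\mathcal{C}$. For $\mathcal{S}_0\subseteq\mathcal{S}$, take any $x\in\mathcal{S}_0$; since $\mathcal{S}_0$ is control invariant and $\mathcal{S}_0\subseteq\mathcal{C}$, the flow under the invariance-certifying controller stays in $\mathcal{S}_0\subseteq\mathcal{C}$ for all time --- but a subtlety is that $\Pfx$ is the flow under $f_\pi$ specifically, not under the controller certifying $\mathcal{S}_0$. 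To handle this cleanly I would require (as the paper implicitly does, and as is natural) that $\pi$ be \emph{one} such invariance-certifying controller for $\mathcal{S}_0$, i.e., $f_\pi$ keeps $\mathcal{S}_0$ invariant; then $\Pfx(x,t)\in\mathcal{S}_0\subseteq\mathcal{C}$ for all $t\ge 0$, so in particular $\Pfx(x,T)\in\mathcal{S}_0$ and $\Pfx(x,t)\in\mathcal{C}$ on $[0,T]$, giving $x\in\mathcal{S}$.

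The core claim is invariance of $\mathcal{S}$. Fix $x_0\in\mathcal{S}$. I would apply the backup controller $\pi$ for all time, so the closed-loop trajectory is $x(t)=\Pfx(x_0,t)$, and show $x(t)\in\mathcal{S}$ for every $t\ge 0$. Split into two regimes. For $t\in[0,T]$: I must check that $x(t)$ satisfies both defining conditions of $\mathcal{S}$. First, $\Pfx(x(t),T)$ --- but here is where I would deviate from a naive reading: the condition $\Pfx(x(t),T)\in\mathcal{S}_0$ asks for the state at horizon $T$ \emph{measured from $x(t)$}, i.e., $\Pfx(x(t),T)=\Pfx(x_0,t+T)$ by additivity \eqref{eq:additive}. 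The right statement to use is Proposition~\ref{prop:deriv_inv}: the quantity $\Pfx(x(t),T-t)$ is constant in $t$ along the $\pi$-flow, and at $t=0$ it equals $\Pfx(x_0,T)\in\mathcal{S}_0$; hence $\Pfx(x(t),T-t)\in\mathcal{S}_0$ for all $t\in[0,T]$. This is exactly what is needed, provided one interprets membership in $\mathcal{S}$ with a \emph{shrinking} horizon $T-t$ rather than a fixed $T$. Second, for $s\in[0,T-t]$, $\Pfx(x(t),s)=\Pfx(x_0,t+s)\in\mathcal{C}$ since $t+s\le T$ and $x_0\in\mathcal{S}$. So $x(t)\in\mathcal{R}(\mathcal{S}_0,\mathcal{C},f_\pi,T-t)$, and because $\mathcal{S}_0$ is $f_\pi$-invariant one checks $\mathcal{R}(\mathcal{S}_0,\mathcal{C},f_\pi,T-t)\subseteq\mathcal{R}(\mathcal{S}_0,\mathcal{C},f_\pi,T)=\mathcal{S}$: extend the reaching trajectory past $T-t$ by letting it sit in $\mathcal{S}_0\subseteq\mathcal{C}$. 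For $t>T$: $x(t)=\Pfx(x_0,t)$, and since $\Pfx(x_0,T)\in\mathcal{S}_0$, additivity gives $x(t)=\Pfx(\Pfx(x_0,T),t-T)\in\mathcal{S}_0$ by invariance of $\mathcal{S}_0$ under $f_\pi$; then $x(t)\in\mathcal{S}_0\subseteq\mathcal{S}$. Combining the two regimes, $x(t)\in\mathcal{S}$ for all $t\ge 0$, so $\mathcal{S}$ is control invariant with witness $\pi$.

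The step I expect to be the main obstacle is the bookkeeping around the horizon: showing that ``reachable within $T$'' is preserved requires either the shrinking-horizon argument plus the monotonicity $\mathcal{R}(\cdot,T') \subseteq \mathcal{R}(\cdot,T)$ for $T'\le T$, or equivalently the observation that once you are inside $\mathcal{S}_0$ you can stall there --- both of which lean essentially on $\mathcal{S}_0$ being invariant specifically under the \emph{chosen} backup flow $f_\pi$. If the paper instead only assumes $\mathcal{S}_0$ is control invariant under \emph{some} controller possibly different from $\pi$, the argument needs a concatenation of controllers (apply $\pi$ until the flow hits $\mathcal{S}_0$, then switch), and continuity/measurability of the resulting switched control law becomes the delicate point; I would flag this and adopt the cleaner assumption that $\pi$ renders $\mathcal{S}_0$ invariant, which is the intended reading and makes everything go through with just \eqref{eq:additive} and Proposition~\ref{prop:deriv_inv}.
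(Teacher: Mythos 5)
Your proof is correct and reaches the same conclusion as the paper, but it fills in a step the paper leaves implicit and uses a different witness controller. The paper's own proof is two lines: it notes $\mathcal{S}\subseteq\mathcal{C}$ from $\Pfx(x,0)=x$, asserts $\mathcal{S}_0\subseteq\mathcal{S}$ ``by definition,'' takes the controller $\pi_0$ certifying invariance of $\mathcal{S}_0$ from Definition~\ref{def:inv}, defines the switched law $\pi'$ equal to $\pi_0$ on $\mathcal{S}_0$ and $\pi$ outside, and states that ``it can be verified'' that $\pi'$ keeps $\mathcal{S}$ invariant --- i.e., exactly the concatenation route you flagged and chose not to take. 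You instead strengthen the hypothesis so that $\pi$ itself renders $\mathcal{S}_0$ invariant, which is precisely Assumption~\ref{ass:ci} that the paper introduces immediately after the theorem, and then verify invariance with $\pi$ alone, making explicit the shrinking-horizon bookkeeping via \eqref{eq:additive} (Proposition~\ref{prop:deriv_inv} is not strictly needed; additivity of the flow already gives $\Pfx(x(t),T-t)=\Pfx(x_0,T)$) together with the monotonicity $\mathcal{R}(\mathcal{S}_0,\mathcal{C},f_\pi,T-t)\subseteq\mathcal{R}(\mathcal{S}_0,\mathcal{C},f_\pi,T)$, which is the continuous-time version of Lemma~\ref{lem:mono}. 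Your observation about the hypotheses is apt: since membership in $\mathcal{S}$ is defined through the $f_\pi$-flow hitting $\mathcal{S}_0$ at exactly time $T$, neither $\mathcal{S}_0\subseteq\mathcal{S}$ nor the membership of the intermediate states $\Pfx(x_0,t)$ in $\mathcal{S}$ follows from control invariance of $\mathcal{S}_0$ under an arbitrary $\pi_0\neq\pi$, so the paper's sketch is implicitly leaning on the same assumption you adopt (or on the setup of the cited reference). The trade-off is that the paper's statement nominally covers the weaker hypothesis at the cost of deferring both the verification and the well-posedness of the switched law, while your argument is self-contained, fully rigorous under Assumption~\ref{ass:ci}, and exhibits $\pi$ itself as the invariance-certifying controller --- which is also the form actually used later in the feasibility proof of Theorem~\ref{thm:feas}.
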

The proof can be found in \cite{gurriet2018online}, we provide a brief proof for completeness.
\begin{proof}
  By definition, $\mathcal{S}_0\subseteq\mathcal{S}$, and $\mathcal{S}\in\mathcal{C}$ since $\Pfx(x,0)\in\mathcal{C}$. By Definition \ref{def:inv}, there exists $\pi_0$ that keeps $\mathcal{S}_0$ invariant. Define $\pi'(x)=\left\{ {\begin{array}{*{20}{c}}
  {{\pi _0}(x),x \in {\mathcal{S}_0}} \\
  {\pi (x),x \notin {\mathcal{S}_0}}
\end{array}} \right.$, it can be verified that $\pi'$ keeps $\mathcal{S}$ invariant.
\end{proof}
\begin{figure}
  \centering
  \includegraphics[width=1\columnwidth]{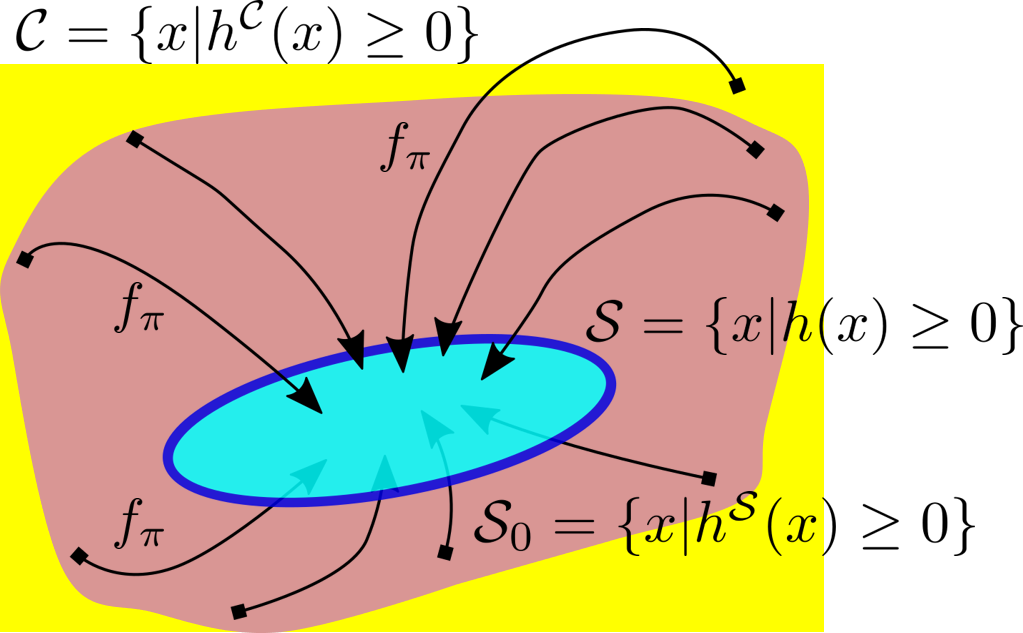}
  \caption{Enlarging a known control invariant set $\mathcal{S}_0$ with the backup policy $\pi$}\label{fig:backup}
\end{figure}

Fig. \ref{fig:backup} shows theorem \ref{thm:inv} pictorially. The yellow square is the state constraint $\mathcal{C}$, given a small known control invariant set $\mathcal{S}_0$ (the blue ellipse), all states that can be driven to $\mathcal{S}_0$ under $f_\pi$ (the closed-loop dynamics under the backup policy $\pi$) while staying inside $\mathcal{C}$ in the meantime forms a larger control invariant set $\mathcal{S}$ shown in brown. Note that the point on the top right is not inside $\mathcal{S}$ because the trajectory under $f_\pi$ is not completely contained $\mathcal{C}$.
 
\begin{ass}\label{ass:ci}
For simplicity, we assume that the backup controller $\pi$ keeps $\mathcal{S}_0$ invariant, and the class-$\mathcal{K}$ function $\alpha$ is selected so that $\forall x\in \mathcal{S}_0, \nabla h^\mathcal{S}f_\pi(x)+\alpha(h(x))\ge 0$.
\end{ass}
 Regarding the horizon $T$, the following lemma is true.
\begin{lem}\label{lem:mono}
  For all $T_1,T_2>0$, let $\mathcal{S}=\mathcal{R}(\mathcal{S}_0,\mathcal{C},f_\pi,T_1)$, then $\mathcal{R}(\mathcal{S},\mathcal{C},f_\pi,T_2)=\mathcal{R}(\mathcal{S}_0,\mathcal{C},f_\pi,T_1+T_2)$.
\end{lem}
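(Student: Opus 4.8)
The plan is to prove the set equality $\mathcal{R}(\mathcal{S},\mathcal{C},f_\pi,T_2)=\mathcal{R}(\mathcal{S}_0,\mathcal{C},f_\pi,T_1+T_2)$ by showing the two inclusions, with the additivity (semigroup) property \eqref{eq:additive} of the flow map doing essentially all the work. The only thing to be careful about is bookkeeping of the time intervals on which the trajectory is required to remain in $\mathcal{C}$.

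For the inclusion $\mathcal{R}(\mathcal{S},\mathcal{C},f_\pi,T_2)\subseteq\mathcal{R}(\mathcal{S}_0,\mathcal{C},f_\pi,T_1+T_2)$, I would take $x$ in the left-hand set, so that $\Pfx(x,T_2)\in\mathcal{S}$ and $\Pfx(x,t)\in\mathcal{C}$ for all $t\in[0,T_2]$. Unpacking $\mathcal{S}=\mathcal{R}(\mathcal{S}_0,\mathcal{C},f_\pi,T_1)$ at the point $\Pfx(x,T_2)$ gives $\Pfx(\Pfx(x,T_2),T_1)\in\mathcal{S}_0$ and $\Pfx(\Pfx(x,T_2),s)\in\mathcal{C}$ for all $s\in[0,T_1]$. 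Applying \eqref{eq:additive} turns the first statement into $\Pfx(x,T_1+T_2)\in\mathcal{S}_0$ and the second into $\Pfx(x,t)\in\mathcal{C}$ for all $t\in[T_2,T_1+T_2]$. Together with the hypothesis on $[0,T_2]$, this covers the full interval $[0,T_1+T_2]$, so $x$ lies in the right-hand set.

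For the reverse inclusion, I would take $x$ with $\Pfx(x,T_1+T_2)\in\mathcal{S}_0$ and $\Pfx(x,t)\in\mathcal{C}$ for all $t\in[0,T_1+T_2]$, and verify the two defining conditions of $\mathcal{R}(\mathcal{S},\mathcal{C},f_\pi,T_2)$. The condition $\Pfx(x,t)\in\mathcal{C}$ for $t\in[0,T_2]$ is immediate since $[0,T_2]\subseteq[0,T_1+T_2]$. For $\Pfx(x,T_2)\in\mathcal{S}$ I need $\Pfx(\Pfx(x,T_2),T_1)\in\mathcal{S}_0$ and $\Pfx(\Pfx(x,T_2),s)\in\mathcal{C}$ for $s\in[0,T_1]$; by \eqref{eq:additive} these are exactly $\Pfx(x,T_1+T_2)\in\mathcal{S}_0$ and $\Pfx(x,T_2+s)\in\mathcal{C}$ for $s\in[0,T_1]$, both of which follow from the hypotheses because $T_2+s\in[T_2,T_1+T_2]\subseteq[0,T_1+T_2]$.

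I do not expect a genuine obstacle here: the statement is a pure consequence of the flow's additivity, and the proof is a short double-inclusion argument. The one place to stay careful is the re-indexing of the ``stays in $\mathcal{C}$'' clauses — making sure the subinterval $[T_2,T_1+T_2]$ (obtained after composing flows) glues correctly with $[0,T_2]$ to yield the full interval $[0,T_1+T_2]$, and conversely that every time point needed in the composed description of $\mathcal{S}$ lies inside $[0,T_1+T_2]$.
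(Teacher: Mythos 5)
Your proof is correct and follows essentially the same route as the paper: unpack the definition of $\mathcal{R}$ and use the additivity property \eqref{eq:additive} to re-index the time intervals. In fact you are slightly more complete than the paper, which only spells out the inclusion $\mathcal{R}(\mathcal{S},\mathcal{C},f_\pi,T_2)\subseteq\mathcal{R}(\mathcal{S}_0,\mathcal{C},f_\pi,T_1+T_2)$ and leaves the (equally routine) reverse inclusion implicit.
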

\begin{proof}
  By definition, $\forall x\in \mathcal{R}(\mathcal{S},\mathcal{C},f_\pi,T_2),$ $\Pfx(x,T_2)\in\mathcal{S}=\mathcal{R}(\mathcal{S}_0,\mathcal{C},f_\pi,T_2)$. By \eqref{eq:additive}, $\Pfx(x,T_1+T_2)=\Pfx(\Pfx(x,T_2),T_1)\in\mathcal{S}_0$. And by \eqref{eq:reach}, the trajectory within $[0,T_1+T_2]$ is contained in $\mathcal{C}$, and the conclusion follows.
\end{proof}
Lemma \ref{lem:mono} states that $\mathcal{R}(\mathcal{S}_0,\mathcal{C},f_\pi,T)$ monotonically increases with $T$ in the set inclusion sense.

With a control invariant set $\mathcal{S}$ defined, a control barrier function $h$ can be defined.
\begin{lem}\label{lem:CBF}
  $\mathcal{S}$ is the 0-level set of the following function
  \begin{equation}\label{eq:cbf}
    h(x)=\min\{\mathop{\min}\limits_{t'\in[0,T]}h^\mathcal{C}(\Phi_{f_\pi}(x,t')),h^\mathcal{S}(\Phi_{f_\pi}(x,T))\}.
  \end{equation}
\end{lem}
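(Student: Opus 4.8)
The plan is to prove the set identity $\mathcal{S}=\{x\mid h(x)\ge 0\}$ by unwinding both definitions and showing that the two membership conditions coincide for every $x$. First I would check that $h$ in \eqref{eq:cbf} is well-defined, i.e. that the inner minimization is attained: the map $t'\mapsto h^\mathcal{C}(\Pfx(x,t'))$ is the composition of the differentiable (hence continuous) function $h^\mathcal{C}$ with $\Pfx(x,\cdot)$, which by \eqref{eq:flow} is $C^1$ in $t$; since $[0,T]$ is compact, the minimum over $t'$ exists and $h(x)\in\mathbb{R}$ for all $x$.

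Next I would invoke the elementary fact that the minimum of a family of real numbers is nonnegative if and only if every member of the family is nonnegative. Applied to \eqref{eq:cbf}, this gives the equivalence
\begin{equation*}
  h(x)\ge 0 \iff \Big(\forall t'\in[0,T],\; h^\mathcal{C}(\Pfx(x,t'))\ge 0\Big)\ \wedge\ h^\mathcal{S}(\Pfx(x,T))\ge 0.
\end{equation*}

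Then I would translate each scalar inequality back into a set-membership statement through the definitions $\mathcal{C}=\{x\mid h^\mathcal{C}(x)\ge 0\}$ and $\mathcal{S}_0=\{x\mid h^\mathcal{S}(x)\ge 0\}$: the first conjunct says $\Pfx(x,t')\in\mathcal{C}$ for all $t'\in[0,T]$, and the second says $\Pfx(x,T)\in\mathcal{S}_0$. These are exactly the two conditions defining the $T$-time constrained reachable set in \eqref{eq:reach}, so $h(x)\ge 0 \iff x\in\mathcal{R}(\mathcal{S}_0,\mathcal{C},f_\pi,T)=\mathcal{S}$, which is the claim.

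I do not expect a genuine obstacle: the statement is in essence a rewriting of the definition of $\mathcal{S}$ as the nonnegativity region of a single scalar function. The only step deserving any care is the well-definedness of $h$ — that the pointwise minimum over the continuum $t'\in[0,T]$ is attained — which is why the continuity-and-compactness remark comes first; the rest is a definition chase together with the trivial observation about minima of nonnegative families.
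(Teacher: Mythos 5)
Your proposal is correct and follows essentially the same route as the paper: both are a definition chase showing $x\in\mathcal{S}$ if and only if every term inside the minimum in \eqref{eq:cbf} is nonnegative, i.e. $h(x)\ge 0$. Your added remark on attainment of the inner minimum (continuity of $h^\mathcal{C}\circ\Pfx(x,\cdot)$ on the compact interval $[0,T]$) plays the same role as the paper's brief continuity observation and is a reasonable bit of extra care, not a different argument.
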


\begin{proof}
First notice that by the continuity of the flow function $\Phi_{f_\pi}$ and the $\min$ function, $h$ is continuous. For all $x\in\mathcal{S}$, by definition, under the backup strategy $\pi$, the state evolution $\Phi_{f_\pi}(x,t')$ would satisfy the constraint and reach $\mathcal{S}_0$ at time $T$, therefore $h(x)\ge 0$. On the other hand, for all $x\notin\mathcal{S}$, under the backup strategy $\pi$, the state evolution either violates the state constraint at some $t'$, i.e., $\exists t'\in[0,T], h^\mathcal{C}(\Phi_{f_\pi}(x,t'))<0$, or does not reach $\mathcal{S}_0$ within the horizon $T$, i.e., $h^\mathcal{S}(\Phi_{f_\pi}(x,T))<0$, indicating that $h(x)<0$. Therefore, $\mathcal{S}=\{x|h(x)\ge 0\}$.
\end{proof}
To implement the CBF $h$, $\dot{h}$ is needed. First $h$ needs to be written as a function of time:
\begin{equation}\label{eq:ht}
  h(t)=\min\{\mathop{\min}\limits_{\tau\in[t,t+T]}h^\mathcal{C}(\Phi_{f_\pi}(x(t),\tau-t)),h^\mathcal{S}(\Phi_{f_\pi}(x(t),T))\}.
\end{equation}
 Since $h$ is defined as the minimum of multiple functions, it may not be differentiable. Even in the differentiable case, the CBF condition may turn out to be nonconvex in the control input $u$. Instead, the CBF condition is enforced on every $\tau$, which is a sufficient condition for original CBF condition.

First consider $h^\mathcal{S}(\Pfx(x(t),T))$. By the chain rule,
\begin{equation*}
\resizebox{1\columnwidth}{!}{$
  \frac{d h^\mathcal{S}(\Pfx(x(t),T))}{dt}=\frac{d h^\mathcal{S}}{dx} \frac{\partial \Pfx(x(t),T)}{\partial x} (f(x(t))+g(x(t))u(t)),
  $}
\end{equation*}
where the sensitivity Jacobian $\frac{\partial \Pfx(x(t),T)}{\partial x}$ can be calculated with \eqref{eq:ja}. The computation of $h^\mathcal{C}(\Phi_{f_\pi}(x(t)),\tau-t)$ is slightly different:
\begin{equation*}
\begin{aligned}
&\frac{d h^\mathcal{C}(\Pfx(x(t),\tau-t))}{dt}\\
=&\frac{d h^\mathcal{S}}{dx}(\frac{\partial\Pfx(x(t),\tau-t)}{\partial x}\dot{x}-\frac{\partial\Pfx(x(t),\tau-t)}{\partial t} )\\
=&\frac{d h^\mathcal{S}}{dx}(\frac{\partial\Pfx(x(t),\tau-t)}{\partial x}\dot{x}-f_\pi(\Pfx(x(t),\tau-t)) ),
\end{aligned}
\end{equation*}
where $\dot{x}=(f(x(t))+g(x(t))u(t))$.
By Proposition \ref{prop:deriv_inv}, $\frac{d\Pfx(x(t),\tau-t)}{dt}|_{u=\pi(x)}=0$, indicating that $\frac{d h^\mathcal{C}(\Pfx(x(t),\tau-t))}{dt}=0$ if $u(t)=\pi(x(t))$.
\begin{rem}
The difference between the derivative of $h^\mathcal{S}(\Pfx(x(t),T))$ and $h^\mathcal{C}(\Pfx(x(t),\tau-t))$ is that the former contains two parts, the change of the future state following the backup strategy due to the change of current state, and the derivative due to $t+T$ increasing with $t$; whereas the latter only contains the first part.
\end{rem}
The CBF condition is $\dot{h}+\alpha(h)\ge 0$.  We shall impose this condition on $h^\mathcal{S}(\Pfx_{f_\pi}(x,T))$ and $h^\mathcal{C}(\Pfx_{f_\pi}(x,\tau-t))$ for every $\tau\in[t,t+T]$ instead of only on the $\tau$ minimizing $h^\mathcal{C}(\Pfx_{f_\pi}(x,\tau-t))$, and the CBF QP is then
\begin{equation}\label{eq:CBF_QP_backup}
  \begin{aligned}
  \mathop{\min}\limits_{u\in\mathcal{U}} &||u-u^0||\\
  \mathrm{s.t.}& \forall \tau\in[t,t+T]\\
  &\frac{d h^\mathcal{C}(\Pfx(x,\tau-t))}{dt}(x,u)+\alpha(h^\mathcal{C}(\Pfx(x(t),\tau-t)))\ge 0\\
  &\frac{d h^\mathcal{S}(\Pfx(x(t),T))}{dt}(x,u)+\alpha(h^\mathcal{S}(\Pfx(x(t),T)))\ge 0.
  \end{aligned}
\end{equation}

\begin{prop}
The constraint in \eqref{eq:CBF_QP_backup} is a sufficient condition for $\dot{h}+\alpha(h)\ge0$.
\end{prop}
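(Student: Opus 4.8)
The plan is to exploit the fact that the function $h$ in \eqref{eq:cbf}--\eqref{eq:ht} is, after the harmless reparametrization $\sigma=\tau-t$, the pointwise minimum of a compactly indexed family of \emph{smooth functions of the state alone}: the look-ahead members $x\mapsto h^\mathcal{C}(\Pfx(x,\sigma))$ for $\sigma\in[0,T]$, together with the single member $x\mapsto h^\mathcal{S}(\Pfx(x,T))$. By \eqref{eq:flow}--\eqref{eq:ja} the flow map and its sensitivity Jacobian are $C^1$, so (with $h^\mathcal{C},h^\mathcal{S}$ differentiable) each member is $C^1$; since $[0,T]$ is compact and the member values and member time-derivatives depend jointly continuously on $(\sigma,x)$, $h$ is locally Lipschitz, hence $t\mapsto h(x(t))$ is locally Lipschitz along any trajectory and differentiable almost everywhere. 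I will therefore read ``$\dot h+\alpha(h)\ge0$'' in the one-sided (equivalently, almost-everywhere) sense, which is the sense in which the set-invariance condition of Section~\ref{sec:prelim} is used to certify forward invariance of $\mathcal{S}=\{x\mid h(x)\ge0\}$.

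The core step is a Danskin-type identity: along the trajectory, the right derivative of $t\mapsto h(x(t))$ equals the minimum, over the \emph{active} members (those attaining the minimum at $x(t)$), of the time-derivatives of those members; the easy ``$\le$'' direction is immediate, and the ``$\ge$'' direction follows from compactness of $[0,T]$, joint continuity, and a mean-value argument on approximate minimizers. Now fix any active member. If it is the $h^\mathcal{S}$-member, the second constraint line of \eqref{eq:CBF_QP_backup} is exactly ``(its time derivative)$+\alpha$(its value)$\,\ge0$'', and its value equals $h(x(t))$, so that member's derivative is $\ge-\alpha(h(x(t)))$. If it is a look-ahead member at an interior $\sigma^\star\in(0,T)$, first-order optimality of $\sigma^\star$ for $\sigma\mapsto h^\mathcal{C}(\Pfx(x,\sigma))$ gives $\nabla h^\mathcal{C}\cdot f_\pi=0$ at $\Pfx(x,\sigma^\star)$; by the computation preceding \eqref{eq:CBF_QP_backup} (which invokes Proposition~\ref{prop:deriv_inv}), the quantity constrained in the first line then coincides with the plain chain-rule derivative $\frac{d}{dt}h^\mathcal{C}(\Pfx(x(t),\sigma^\star))$, and again that member's derivative is $\ge-\alpha(h(x(t)))$. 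Since every active member has value exactly $h(x(t))$, taking the minimum over active members yields $\dot h+\alpha(h)\ge0$.

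The part I expect to be the main obstacle is the treatment of the boundary look-ahead indices $\sigma^\star\in\{0,T\}$, where the extra term $-\nabla h^\mathcal{C}\cdot f_\pi$ carried by the constrained quantity in \eqref{eq:CBF_QP_backup} no longer vanishes, and one must check it never hurts. At $\sigma^\star=0$ one has $\Pfx(x,0)=x$ and $\partial\Pfx(x,0)/\partial x=\Id$, and optimality of $\sigma=0$ over $[0,T]$ forces $\nabla h^\mathcal{C}\cdot f_\pi(x)\ge0$, so discarding $-\nabla h^\mathcal{C}\cdot f_\pi$ only strengthens the inequality. At $\sigma^\star=T$ one uses Assumption~\ref{ass:ci}: restricting to the relevant set $\{x\mid h(x)\ge0\}=\mathcal{S}$, we have $\Pfx(x,T)\in\mathcal{S}_0$, and $\pi$ keeps $\mathcal{S}_0$ invariant inside $\mathcal{C}$, so $\sigma\mapsto h^\mathcal{C}(\Pfx(x,\sigma))$ does not dip below its value at $\sigma=T$ for $\sigma$ slightly beyond $T$ either; combined with optimality over $[0,T]$ this pins the directional derivative of $h^\mathcal{C}$ along $f_\pi$ at $\Pfx(x,T)$ to zero, and the extra term again drops out. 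The remaining pieces are routine: the precise justification of the Danskin identity over a continuum, local Lipschitzness of $h$, and the observation that since all active members share the value $h(x(t))$, passing from ``each active derivative $\ge-\alpha(h)$'' to ``the minimum of the active derivatives $\ge-\alpha(h)$'' is immediate.
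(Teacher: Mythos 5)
Your overall route is the same as the paper's: write $h$ as the pointwise minimum of the compactly indexed family $x\mapsto h^\mathcal{C}(\Pfx(x,\sigma))$, $\sigma\in[0,T]$, together with $x\mapsto h^\mathcal{S}(\Pfx(x,T))$, and bound $\dot h$ from below by a Danskin-type argument over the active members. The paper compresses this into two asserted inequalities without discussing active indices; your treatment of interior minimizers (where $\nabla h^\mathcal{C}\cdot f_\pi=0$), of $\sigma^\star=0$ (where optimality gives $\nabla h^\mathcal{C}\cdot f_\pi(x)\ge0$), and of the $h^\mathcal{S}$ member (where the quantity constrained in \eqref{eq:CBF_QP_backup} is exactly the member derivative) is correct and in fact more explicit than the paper's, since you correctly flag that the first constraint of \eqref{eq:CBF_QP_backup} involves the derivative at fixed absolute $\tau$, which differs from the state-only member derivative by $-\nabla h^\mathcal{C}\cdot f_\pi$.

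However, your handling of the boundary case $\sigma^\star=T$ has a genuine gap. Assumption~\ref{ass:ci} together with $\mathcal{S}_0\subseteq\mathcal{C}$ only guarantees $h^\mathcal{C}(\Pfx(x,\sigma))\ge0$ for $\sigma\ge T$; it does not imply that $\sigma\mapsto h^\mathcal{C}(\Pfx(x,\sigma))$ ``does not dip below its value at $\sigma=T$''. Inside $\mathcal{S}_0$ the backup flow may strictly decrease $h^\mathcal{C}$ (toward $0$ or a positive infimum) while remaining nonnegative, so $\nabla h^\mathcal{C}\cdot f_\pi$ at $\Pfx(x,T)$ can be strictly negative, and your conclusion that this directional derivative is ``pinned to zero'' does not follow; the extra term therefore does not drop out. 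Concretely, if $h^\mathcal{C}\circ\Pfx(x,\cdot)$ is strictly decreasing on $[0,T]$ (and keeps decreasing, while nonnegative, after entering $\mathcal{S}_0$) and $h^\mathcal{C}(\Pfx(x,T))<h^\mathcal{S}(\Pfx(x,T))$, then $\sigma^\star=T$ is the unique active index, the member derivative equals the constrained quantity plus $\nabla h^\mathcal{C}\cdot f_\pi(\Pfx(x,T))<0$, and a $u$ making the constraint at $\tau=t+T$ tight yields $\dot h+\alpha(h)=\nabla h^\mathcal{C}\cdot f_\pi<0$. Closing this case requires a different ingredient (e.g.\ extending the inner minimum beyond $T$, or an additional hypothesis such as $\nabla h^\mathcal{C}\cdot f_\pi\ge0$ on $\mathcal{S}_0$); note that the paper's own proof buries exactly this corner case inside its first asserted inequality rather than resolving it, so you identified the right difficulty but your proposed fix does not carry it.
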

\begin{proof}
For notational simplicity, let $\bar{\xi}^\mathcal{C}(x,\tau)\doteq h^\mathcal{C}(\Pfx(x,\tau-t))$, $\xi^\mathcal{C}(x)\doteq \mathop{\min}\limits_{\tau\in[t,t+T]} \bar{\xi}^\mathcal{C}(x,\tau) $, and $\xi^\mathcal{S}\doteq h^\mathcal{S}(\Pfx(x,T))$, then $h(x) = \min\{\xi^\mathcal{C}(x),\xi^\mathcal{S}(x)\}$. First notice that
\begin{equation*}
\resizebox{1\columnwidth}{!}{$
\frac{d \xi^\mathcal{C}(x(t))}{dt}+\alpha(\xi^\mathcal{C}(x(t)))
\ge \mathop{\min}\limits_{\tau\in[t,t+T]}\left[\dot{\bar{\xi}}^\mathcal{C}(x(t),\tau)+\alpha(\bar{\xi}^\mathcal{C}(x(t),\tau))\right]\ge 0,$}
\end{equation*}
then
  \begin{equation*}
  \resizebox{1\columnwidth}{!}{$
  \begin{aligned}
  \dot{h}+\alpha(h)=&\frac{d \min\{\xi^\mathcal{C}(x(t)),\xi^\mathcal{S}(x(t))\}}{dt}+\alpha(\min\{\xi^\mathcal{C}(x(t)),\xi^\mathcal{S}(x(t))\})\\
  \ge &\min\{\dot{\xi}^\mathcal{C}(x(t))+\alpha(\xi^\mathcal{C}(x(t))),\dot{\xi}^\mathcal{S}(x(t))+\alpha(\xi^\mathcal{S}(x(t)))\}\ge 0.
  \end{aligned}
  $}
  \end{equation*}
\end{proof}

In practice, $\mathop{\min}\limits_{\tau\in[t,t+T]}h^\mathcal{C}(\Phi_{f_\pi}(x(t)),\tau-t)$ may be difficult to evaluate, so a finite set of $t=\tau_0<\tau_1...<\tau_N=t+T$ is used instead of $[t,t+T]$, and the approximation error can be bounded given the Lipschitz constants of the flow map.

\begin{thm}\label{thm:feas}
  Under Assumption \ref{ass:ci}, the backup CBF QP in \eqref{eq:CBF_QP_backup} is always feasible for $h\ge 0$.
\end{thm}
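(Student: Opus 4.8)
The plan is to verify feasibility by exhibiting one explicit admissible point of the QP \eqref{eq:CBF_QP_backup}, namely the backup input $u=\pi(x(t))$ itself, at every state $x=x(t)$ with $h(x)\ge 0$. Since $\pi$ maps into $\mathcal{U}$ by definition, this candidate lies in the feasible region $u\in\mathcal{U}$, so everything reduces to checking the two families of inequality constraints in \eqref{eq:CBF_QP_backup} at $u=\pi(x(t))$.

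First, the constraints built from $h^\mathcal{C}$. By Lemma \ref{lem:CBF}, $h(x)\ge 0$ is equivalent to $x\in\mathcal{S}=\mathcal{R}(\mathcal{S}_0,\mathcal{C},f_\pi,T)$, so by the definition \eqref{eq:reach} the backup trajectory satisfies $\Phi_{f_\pi}(x(t),\tau-t)\in\mathcal{C}$, i.e. $h^\mathcal{C}(\Phi_{f_\pi}(x(t),\tau-t))\ge 0$, for every $\tau\in[t,t+T]$; since $\alpha$ is class-$\mathcal{K}$ this gives $\alpha(h^\mathcal{C}(\Phi_{f_\pi}(x(t),\tau-t)))\ge 0$. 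On the other hand, Proposition \ref{prop:deriv_inv} (the observation already recorded just above \eqref{eq:CBF_QP_backup}) gives $\frac{d h^\mathcal{C}(\Phi_{f_\pi}(x(t),\tau-t))}{dt}=0$ when $u(t)=\pi(x(t))$. Adding the two terms, each $h^\mathcal{C}$-constraint reads $0+\alpha(\text{nonnegative})\ge 0$, which holds for all $\tau\in[t,t+T]$ simultaneously.

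Second, the constraint built from $h^\mathcal{S}$, which is the only place where real work is needed. Unlike the $h^\mathcal{C}$ terms, the time derivative of $h^\mathcal{S}(\Phi_{f_\pi}(x(t),T))$ under $u=\pi(x(t))$ does not vanish, because the horizon endpoint $t+T$ advances with $t$ (the point made in the Remark after \eqref{eq:CBF_QP_backup}). Flow additivity \eqref{eq:additive} pins it down: along the backup trajectory $\Phi_{f_\pi}(x(t),T)=\Phi_{f_\pi}(x(t_0),t-t_0+T)$, so differentiating in $t$ gives $\frac{d}{dt}\Phi_{f_\pi}(x(t),T)=f_\pi(\Phi_{f_\pi}(x(t),T))$ and hence $\frac{d h^\mathcal{S}(\Phi_{f_\pi}(x(t),T))}{dt}=\nabla h^\mathcal{S}\, f_\pi(\Phi_{f_\pi}(x(t),T))$, the ordinary CBF expression evaluated at the endpoint of the backup flow. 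But $h(x)\ge 0$ also forces $\Phi_{f_\pi}(x(t),T)\in\mathcal{S}_0$ by \eqref{eq:reach}, so Assumption \ref{ass:ci} applied at this endpoint yields $\nabla h^\mathcal{S} f_\pi(\Phi_{f_\pi}(x(t),T))+\alpha(h^\mathcal{S}(\Phi_{f_\pi}(x(t),T)))\ge 0$, which is precisely the $h^\mathcal{S}$-constraint of \eqref{eq:CBF_QP_backup} at $u=\pi(x(t))$. Therefore $u=\pi(x(t))$ is feasible, and the QP is feasible whenever $h\ge 0$.

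The main obstacle is entirely contained in this last step: one must recognize that the $h^\mathcal{S}$ derivative under the backup law is $\nabla h^\mathcal{S} f_\pi$ at $\Phi_{f_\pi}(x,T)$ rather than zero, use \eqref{eq:reach} together with Lemma \ref{lem:CBF} to certify that this endpoint lies in $\mathcal{S}_0$, and then invoke Assumption \ref{ass:ci} there; the $h^\mathcal{C}$ part and the admissibility $\pi(x)\in\mathcal{U}$ are immediate. Finally, replacing $[t,t+T]$ by a finite grid $\tau_0<\dots<\tau_N$ (as in the discretization remark) does not change anything, since $u=\pi(x(t))$ satisfies each of the finitely many resulting constraints a fortiori.
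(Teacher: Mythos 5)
Your proposal is correct and follows essentially the same route as the paper's own proof: take $u=\pi(x(t))\in\mathcal{U}$ as the feasible point, use Proposition \ref{prop:deriv_inv} together with $h(x)\ge 0$ (hence $h^\mathcal{C}(\Phi_{f_\pi}(x(t),\tau-t))\ge 0$ for all $\tau$) to dispose of the $h^\mathcal{C}$ constraints, and observe that the $h^\mathcal{S}$ derivative reduces to $\nabla h^\mathcal{S} f_\pi$ at $\Phi_{f_\pi}(x(t),T)\in\mathcal{S}_0$, where Assumption \ref{ass:ci} applies. Your added justification of that last derivative via flow additivity \eqref{eq:additive} is a welcome bit of extra detail, but it does not change the argument.
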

\begin{proof}
Since $h(x)\ge0$, $\forall \tau\in[t,t+T],h^\mathcal{C}(\Pfx(x(t),\tau-t))\ge 0$, $h^\mathcal{S}(\Pfx(x(t),T))\ge 0$. By proposition \ref{prop:deriv_inv}, if $u(t)=\pi(x(t))$, for all $\tau\in[t,t+T]$, $\frac{d h^\mathcal{C}(\Pfx(x(t),\tau-t))}{dt}=0$, $\frac{d h^\mathcal{S}(\Pfx(x(t),T))}{dt}=\frac{d h^\mathcal{S}}{dx}f_\pi(\Pfx(x(t),T))$. Immediately, the first constraint in \eqref{eq:CBF_QP_backup} is satisfied. By Assumption \ref{ass:ci}, since $\Pfx(x(t),T)\in\mathcal{S}_0$, $\nabla h^\mathcal{S}f_\pi(\Pfx(x(t),T))+\alpha(h^\mathcal{S}(\Pfx(x(t),T)))\ge0$, indicating that the second constraint in \eqref{eq:CBF_QP_backup} is satisfied. Since $\pi$ is a feasible control policy, $\pi(x(t))\in\mathcal{U}$, thus a feasible solution to \eqref{eq:CBF_QP_backup}.
\end{proof}

\subsection{Relative degree of backup CBFs}

Another benefit of the backup CBF is that it is always relative degree one given the system is weakly locally controllable. To show this, we first review some definitions.
\begin{defn}
  A function $h:\mathbb{R}^n\to\mathbb{R}$ is said to have relative degree $r$ with respect to the dynamic system in \eqref{eq:dyn} at a point $x_0$ if
  \begin{equation*}
  \resizebox{1\columnwidth}{!}{$
  \begin{aligned}
  &\mathcal{L}_g\mathcal{L}_f^k h(x)=0, &\forall x \text{ in a neighborhood of $x_0$, $\forall k\le r-2$,}\\
  & \mathcal{L}_g\mathcal{L}_f^{r-1} h(x_0)\neq 0 &
  \end{aligned}
  $}
  \end{equation*}
\end{defn}
\begin{figure}[t]
  \centering
  \includegraphics[width=0.85\columnwidth]{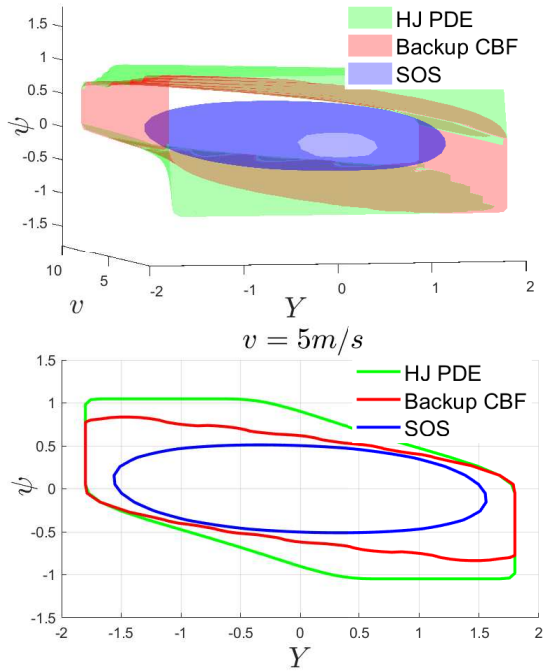}
  \caption{$\{x|h(x)\ge 0\}$ under CBF computed with Hamilton Jacobi PDE (green), backup CBF (red), and SOS (blue)}\label{fig:SOS_comp}
\end{figure}

\begin{defn}\cite{hermann1977nonlinear}
  Given a dynamic system, $x_1$ is accessible from $x_0$ via $\Omega\subseteq\mathbb{R}^n$ if there exists a control input signal $\mathbf{u}:[t_0,t_1]\to \mathcal{U}$ such that the trajectory starting at $x_0$ reaches $x_1$ at $t_1$ and $\forall t\in[t_0,t_1],x(t)\in \Omega$. The accessible set of a point $x_0$ under $\Omega$ is the set of all points accessible from $x_0$ under $\Omega$, denoted as $A_\Omega(x_0)$.
\end{defn}
\begin{defn}\cite{hermann1977nonlinear}
The dynamic system in \eqref{eq:dyn} is locally weakly controllable if for every $x_0\in\mathbb{R}^n$ and every neighborhood $\Omega$ of $x$, $A_\Omega(x_0)$ has a non-empty interior.
\end{defn}
\begin{thm}\label{thm:rel_deg}
  If the dynamic system \eqref{eq:dyn} is locally weakly controllable, then for an $h:\mathbb{R}^n\to\mathbb{R}$ that is not constant for any subset $\Omega\subseteq \mathbb{R}^n$ with a non-empty interior, $h\circ\Pfx(x,t)$ has relative degree 1 apart from singular points for all $t>0$.
\end{thm}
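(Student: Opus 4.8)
The plan is to show that the \emph{singular set} $\Sigma_t\doteq\{x\mid \mathcal{L}_{g_i}(h\circ\Pfx(\cdot,t))(x)=0\ \text{for all }i\}$ has empty interior, so that $h\circ\Pfx(\cdot,t)$ has relative degree one on a dense open set. The starting point is the chain rule together with the sensitivity Jacobian from \eqref{eq:ja}: writing $Q(t)=\partial\Pfx(x,t)/\partial x$,
\[
  \mathcal{L}_{g_i}\bigl(h\circ\Pfx(\cdot,t)\bigr)(x)=\nabla h\bigl(\Pfx(x,t)\bigr)^\top Q(t)\,g_i(x).
\]
Two ingredients are immediate. First, $Q(t)$ is invertible for every $t$: by \eqref{eq:ja} it solves a linear matrix ODE with $Q(0)=\Id$, so $\det Q(t)=\exp\int_0^t \operatorname{tr}\tfrac{df_\pi}{dx}\,d\tau\neq 0$. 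Second, $\nabla h$ vanishes only on a set with empty interior -- otherwise $h$ is constant on an open set, against the hypothesis -- and since $\Pfx(\cdot,t)$ is a diffeomorphism the same holds for $\nabla h\circ\Pfx(\cdot,t)$. Hence $\Sigma_t$ can have nonempty interior only if on some open set $V$ the nonzero covector $\nabla h(\Pfx(x,t))$ is orthogonal to the whole subspace $Q(t)\,\mathrm{span}\{g_1(x),\dots,g_m(x)\}$ for every $x\in V$.

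I would rule this out in two complementary ways. (a) \emph{Small $t$, points of finite relative degree.} If $h$ has relative degree $r$ at $x$ with respect to \eqref{eq:dyn}, then from the Lie series $h\circ\Pfx(\cdot,t)=\sum_{k\ge0}\tfrac{t^k}{k!}\mathcal{L}_{f_\pi}^k h$ and the identities $\mathcal{L}_{f_\pi}^k h=\mathcal{L}_f^k h$ for $k\le r-1$ (which follow from $\mathcal{L}_g\mathcal{L}_f^j h=0$ for $j\le r-2$ and $f_\pi=f+g\pi$) one obtains $\mathcal{L}_g\bigl(h\circ\Pfx(\cdot,t)\bigr)(x)=\tfrac{t^{r-1}}{(r-1)!}\mathcal{L}_g\mathcal{L}_f^{r-1}h(x)+O(t^r)\neq 0$ for all sufficiently small $t>0$. (b) \emph{The degenerate set.} The only points not covered by (a) are those where $h$ has no finite relative degree, i.e.\ where $\mathcal{L}_g\mathcal{L}_f^k h=0$ for all $k$; if these filled an open set then, invoking local weak controllability -- the accessibility rank condition for $\{f,g_1,\dots,g_m\}$, which is invariant under the feedback $u\mapsto u-\pi(x)$ -- one argues that $dh$ is annihilated by a full-rank family of vector fields, so $h$ is locally constant, a contradiction. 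This already yields relative degree one near almost every point for all small enough $t$.

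The step I expect to be the main obstacle is ruling out an \emph{isolated} cancellation at some large $t$: the frozen-$t$ data alone cannot prevent $Q(t)g(x)$ from rotating, at one special value of $t$, into $\ker\nabla h(\Pfx(x,t))$ on an open set (a chain of integrators with $h$ a suitable combination of its top two coordinates realizes exactly this for one $t$). The remedy must exploit that $t$ varies: rewriting $\mathcal{L}_{g_i}(h\circ\Pfx(\cdot,t))=\bigl[\mathcal{L}_{(\Pfx(\cdot,t))_{*} g_i}h\bigr]\circ\Pfx(\cdot,t)$ with $(\Pfx(\cdot,t))_{*} g_i=\sum_{k\ge0}\tfrac{(-t)^k}{k!}(\operatorname{ad}_{f_\pi})^k g_i$, identical vanishing on an open subset of $(0,\infty)\times\mathbb{R}^n$ would force $dh$ to annihilate every iterated bracket $(\operatorname{ad}_{f_\pi})^k g_i$, hence -- since $\{X:\mathcal{L}_X h\equiv0\}$ is bracket-closed, $f_\pi$ is invariant under its own flow, and local weak controllability and diffeomorphisms preserve the rank condition -- the whole accessibility distribution, forcing $dh\equiv0$, a contradiction. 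Thus $\mathcal{L}_g(h\circ\Pfx(\cdot,t))$ is not identically zero and (cleanest under analyticity of $f,g,\pi,h$) its zero set is nowhere dense; that zero set, together with the lower-dimensional set where $\nabla h$, $\det Q$, or the rank condition degenerate, constitutes the ``singular points''. Closing this last step without analyticity, and for systems that are weakly but not strongly accessible (where $\{(\operatorname{ad}_{f_\pi})^k g_i\}$ need not span on its own), is the delicate point, and is precisely what the qualifier ``apart from singular points'' is meant to absorb.
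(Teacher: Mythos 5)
Your proposal does not prove the theorem as stated: by your own admission, the case that carries the entire content of the claim --- a \emph{fixed, arbitrary} $t>0$, ruling out that $\mathcal{L}_g\bigl(h\circ\Pfx(\cdot,t)\bigr)$ vanishes on an open set of $x$ --- is left open. Parts (a)--(b) only cover sufficiently small $t$ (and only near points where $h$ has a finite relative degree, with a Lie-series argument that implicitly needs analyticity), and your bracket argument only excludes identical vanishing on an open subset of $(0,\infty)\times\mathbb{R}^n$, which is strictly weaker than the frozen-$t$ statement the theorem makes. The ``apart from singular points'' caveat cannot absorb this, since it is a pointwise-in-$x$ exclusion, not an exclusion of exceptional values of $t$. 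So, judged as a proof of Theorem~\ref{thm:rel_deg}, there is a genuine gap, and it sits exactly at the step you flag as the main obstacle.

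For comparison, the paper closes this step by a different, fixed-$t$ contradiction argument: if $\mathcal{L}_g\bigl(h\circ\Pfx(\cdot,t)\bigr)\equiv 0$ on a neighborhood $\Omega$ of $x_0$, local weak controllability gives the accessible set $A_\Omega(x_0)$ a non-empty interior, the full-rank flow map $\Pfx(\cdot,t)$ pushes it to a set with non-empty interior, and $h$ is claimed to be constant on that image, contradicting the non-degeneracy of $h$ --- no series expansion, no brackets, no analyticity. You should be aware, however, that the obstacle you identify is real and in fact hits that constancy step: vanishing of the $g$-Lie derivative at the single time $t$ does not freeze $h\circ\Pfx(\cdot,t)$ along controlled trajectories, because the drift still changes its value; making the paper's argument rigorous requires vanishing over an interval of $t$, which is precisely the two-variable statement your bracket computation addresses. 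Your own hinted example makes this concrete: for $f_\pi(x)=(x_2,0)^\intercal$, $g=(0,1)^\intercal$, $h(y)=y_1-t^\ast y_2$, one has $h\circ\Pfx(x,t)=x_1+(t-t^\ast)x_2$, so $\mathcal{L}_g\bigl(h\circ\Pfx(\cdot,t^\ast)\bigr)\equiv 0$ everywhere although $h$ is non-constant on every open set and the system is (strongly) accessible. So the gap in your write-up is not a missing trick you overlooked; as literally stated the fixed-$t$ claim can fail at isolated times, and a correct version needs either an exceptional set in $t$, or genericity/structural assumptions on $h^\mathcal{C},h^\mathcal{S}$ (the situation relevant to the backup CBF), along the lines your $(t,x)$-argument supports.
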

\begin{proof}
  We prove this by contradiction. Suppose there exists $x_0\in\mathbb{R}^n$ and a neighborhood $\Omega$ where $\forall x\in \Omega, \mathcal{L}_g (h\circ\Pfx)(x)=0$. Then we can find a small $t_0>0$ such that the flow from $x_0$ stays inside $\Omega$ within $t\in[0,t_0]$. Since the system is locally weakly controllable, $A_\Omega(x_0)$ is a compact set with a non-empty interior. With a slight abuse of notation, let $\Pfx(S,t)=\{x|x=\Pfx(x_0,t),x_0\in S\}$ be the image of the flow map acting on a subset $S\subseteq\mathbb{R}^n$. For any $t>0$, since we only consider regular points, $\Pfx(\cdot,t)$ has rank $n$, thus $\Pfx(A_\Omega(x_0),t)$ has non-empty interior. However, since $\mathcal{L}_g h(x)=0$ for all $x\in \Omega$, $h(x)$ is constant inside $\Pfx(A_\Omega(x_0),t)$, which contradicts the assumption that $h$ is not constant in any set with non-empty interior. Thus, $h\circ\Pfx(x,t+t_0)$ has relative degree 1. Since $t_0$ can be chosen arbitrarily small, $h\circ\Pfx(x,t)$ has relative degree 1 for all $t>0$.
\end{proof}

We cannot show that the CBF defined in \eqref{eq:cbf} has relative degree 1 due to the $\min$ function over $t\in[0,T]$, however, note that the constraints in the CBF QP in \eqref{eq:CBF_QP_backup} are on all $h^\mathcal{C}\circ\Pfx(x,\tau-t)$ with $\tau\in[t,t+T]$ and $h^\mathcal{S}\circ\Pfx(x,T)$. Therefore, as long as $h^\mathcal{S}$ and $h^\mathcal{C}$ are not constant for any subset with a non-empty interior, the CBF QP can be solved as if $h$ has relative degree 1.

Theorem \ref{thm:rel_deg} shows that the backup CBF is sufficient without any high order extension or backstepping, it naturally bridges the potentially high relative degree functions $h^\mathcal{C}$ and $h^\mathcal{S}$ with the input dynamics via the flow map.

\begin{figure}[t]
  \centering
  \includegraphics[width=0.9\columnwidth]{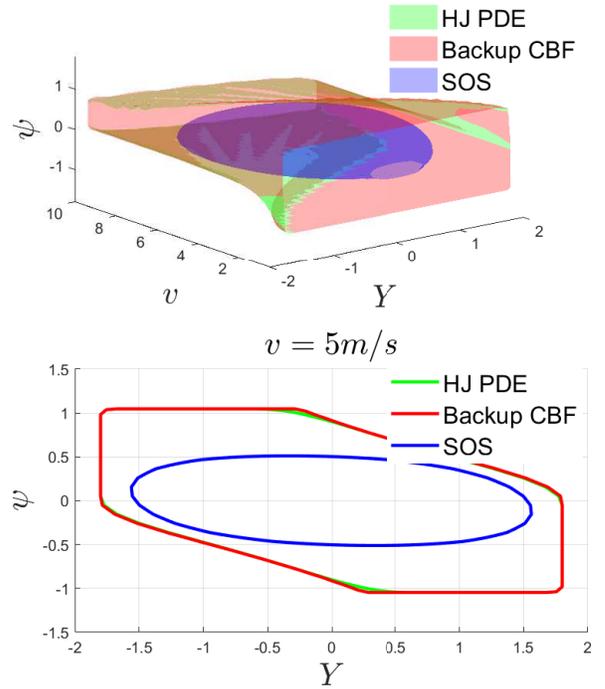}
  \caption{$\{x|h(x)\ge 0\}$ with an aggressive backup policy}\label{fig:aggressive}
\end{figure}

\section{Comparative study}\label{sec:comparison}
This section presents the comparative study of the backup CBF against some benchmark methods.

One major concern of the backup CBF is that since the control invariant set is induced by a fixed backup policy, how conservative is the resulting CBF? The conservatism of the CBF can be measured by the size of the set $\{x|h(x)\ge0\}$, which is the induced control invariant set. We compare the backup CBF with the Hamilton Jacobi (HJ) PDE result, which is a close approximation of the maximum control invariant set, and a Sum-of-Squares (SOS) result. The Hamilton Jacobi formulation follows \cite{mitchell2005time} and is computed with the level-set toolbox. In particular, the following HJ PDE is solved:
\begin{equation*}
 -\frac{\partial h}{\partial t}= H(x,p)= \max\limits_{u\in\mathcal{U}} p^\intercal f(x,u),
\end{equation*}
where $p=\frac{\partial h}{\partial x}$ is the momentum vector. $\{x|h(x,t)\ge 0\}$ is then the backward reachable set at time $t$. For a sufficiently large $t$, $\lim_{t\to\infty} h(x,t)$ is an inner approximation of the maximum control invariant set.

The SOS approach follows \cite{xu2017correctness,chen2019enhancing}, which computes a valid CBF under a fixed controller. For fairness, we chose the same backup controller $\pi$ for the backup CBF and the SOS program.

\subsection{Lane keeping with Dubin's car model}
To demonstrate the result, a simple lane keeping problem under Dubin's car model is considered:
\begin{equation}\label{eq:lane_keeping}
  \dot{x}=\begin{bmatrix}
    \dot{Y} &
    \dot{v} &
    \dot{\psi}
  \end{bmatrix}^\intercal=f(x,u)=\begin{bmatrix}
                  v \sin(\psi) &
                  a &
                  r
                \end{bmatrix}^\intercal
\end{equation}
where the state $x$ consists of lateral position $Y$, velocity $v$, and heading angle $\psi$, and the input consists of acceleration $a$ and yaw rate $r$,
The state constraint considered is $\mathcal{C}=\{|Y|\le Y_{\max}\wedge |\psi|\le \psi_{\max}\}$ with $Y_{\max}=1.8m,\psi_{\max}=\pi/3$, and $\mathcal{U}=\{a,r||a|\le a_{\max}\wedge |r|\le r_{\max}\}$.

The backup policy is chosen as
\begin{equation}\label{eq:backup_pol}
  \mathbf{u}(x)=\begin{bmatrix}
    a \\
    r
  \end{bmatrix}=\begin{bmatrix}
                  \Sat_{a_{\max}}(k_v(v_{des}-v)) \\
                  \Sat_{r_{\max}}(k_y[Y;\psi]),
                \end{bmatrix}
\end{equation}
where $\Sat$ is the saturation function, $k_v>0$ is a constant and $k_y$ is calculated via LQR. Under the same controller $\pi$, a CBF is synthesized with the following SOS program:
\begin{equation}\label{eq:SOS}
  \begin{aligned}
  \mathop{\max} ~& R\\
  \mathrm{s.t.}~& h(x)-s_1(x)(Q(x)-R^2)\in\Sigma[x]\\
  & -h(x)-s_2(x)(W^2-Y^2)\in\Sigma[x]\\
  &\nabla h \cdot f(x,\mathbf{u}(x))+\alpha(h)-s_3(x)(Q(x)-R_0^2)\in\Sigma[x]\\
  &s_1(x),s_2(x),s_3(x)\in\Sigma[x],
  \end{aligned}
\end{equation}
with $Q(x)\doteq \hat{x}^\intercal Q\hat{x}$, and $\hat{x} =[Y,v-v_{des},\psi]^\intercal$, where $Q$ is a PSD matrix. $\Sigma[x]$ is the set of sum-of-squares polynomial of $x$, $s_1,s_2,s_3$ are SOS multipliers for the Positivstellensatz procedure that enforces positive definiteness of a polynomial on a semialgebraic set. Due to the cross product term in line 2, \eqref{eq:SOS} cannot be directly solved via SOS, a line search is used to find the largest $R$ that renders the SOS program feasible. $R^0$ is a radius that contains the region of state space we are interested in, \eqref{eq:SOS} essentially tries to fit the largest ball of $Q(x)$ inside $\{h(x)\ge0\}$ where $h$ is the CBF that satisfies \eqref{eq:CBF}. For simplicity, $\sin \psi\approx \psi-\psi^3/3$, and the saturation is lifted, which should lead to a larger control invariant set.
\begin{table*}[t]
\centering
\begin{tabular}{c|c|c|c|c|c|c|c}
            & Dimension & HJ Offline & HJ online & SOS offline & SOS online & Backup online integration & Backup CBF QP \\ \cline{1-8}
Dubin's car & 3         & 37.3s      & 0.5ms     & 3.6s        & 0.4ms      & 1.4ms                     & 3.6ms         \\
Quadrotor   & 16        & NA         & NA        & NA          & NA         & 4.92ms                     & 28.33ms
\end{tabular}
\caption{Computation time of backup CBF, CBF based on HJ, and CBF based on SOS}
\label{tab:computation}
\end{table*}

Fig. \ref{fig:SOS_comp} shows the comparison of the CBF computed with HJ, backup CBF, and SOS where the upper plot shows the 3D surface and the lower plot shows the slicing of the set at $v=5m/s$. As expected, Hamilton Jacobi PDE generates the largest control invariant set, the one from backup CBF is smaller, and the SOS one is the smallest since $h$ is restricted to be a polynomial (4-th order in this case).

Obviously, one can choose a better backup policy and greatly improve the result. For example, if we change $v_{des}=0$, and switch to a more aggressive LQR design, the control invariant set computed from backup CBF is almost identical to the one from Hamilton Jacobi, as shown in Fig. \ref{fig:aggressive}.

\subsection{Aeroplane avoidance}
One classic example in safety-critical control is the aeroplane avoidance with the following dynamics:
\begin{equation}\label{eq:air}
  \begin{bmatrix}
    \Delta \dot{X} \\
    \Delta \dot{Y} \\
    \Delta \dot{\psi}
  \end{bmatrix}=\begin{bmatrix}
                  -v_a+v_b \cos(\Delta \Psi)+u \Delta Y \\
                  v_b \sin(\Delta \psi)-u\Delta X \\
                  -u
                \end{bmatrix},
\end{equation}
where $\Delta X$, $\Delta Y$, and $\Delta \psi$ are the difference of $X,Y$ coordinates and heading angles plane $a$ and $b$, $u$ is the turning rate of plane $a$, $v_a$ and $v_b$ are their velocities, assumed to be constant.

The backup CBF approach is not good at handling malicious disturbance, therefore we assume that plane $b$ maintains a fixed orientation. The safety constraint is defined as $\Delta X^2+\Delta Y^2\ge R^2$, where $R$ is the minimum distance to maintain between the two planes. Fig. \ref{fig:air} shows the danger set ($\{x|h(x)\le0\}$) computed with the HJI PDE and the one computed with a simple backup policy:
\begin{equation}\label{eq:backup_air}
  u = -u_{\max} \mathbf{sign}(\Delta Y).
\end{equation}
The two danger sets are almost identical.
\begin{figure}
  \centering
  \includegraphics[width=1\columnwidth]{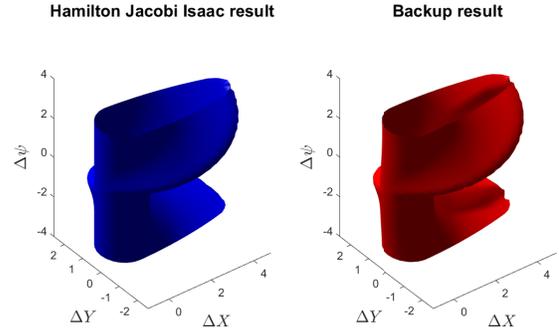}
  \caption{Danger set of the aeroplane collision avoidance}\label{fig:air}
\end{figure}

\subsection{Computation}
Technically, it is not fair to compare the computation time of the backup CBF approach with the benchmarks as the former does not require any offline computation while the online computation is more complicated than normal CBF QP, while CBF based on explicit control invariant set requires computing the control invariant set offline.

Table \ref{tab:computation} shows the computation time of the CBF QP with CBF constructed by the HJ PDE, SOS, and the backup CBF approach. The computation time for backup CBF consists of two parts, the integration of the backup policy, and the CBF QP. As shown by the result, the majority of solver time is spent on the CBF QP, which is 1 magnitude larger than the time used to solve the CBF QP with explicit form. This is mainly because \eqref{eq:CBF_QP_backup} enforces the CBF condition on every $\tau\in[t,t+T]$ (replaced with a sequence of $\tau_i$ in practice), which significantly increases the number of constraints. The integration time is relatively small compared to the QP time and can be further reduced. For instance, \cite{folkestad2020data} uses a Koopman operator approach to further simplify the online integration. We also apply the backup CBF on a 16-dimensional quadrotor model, which is way beyond the limit of HJ (maximum dimension 4-5) and SOS (maximum dimension around 8-10), and the backup CBF can still be implemented with a reasonable loop rate.

\section{Conclusion}
We give a tutorial of the backup CBF approach and compare it to the HJ PDE approach and the SOS approach as benchmarks. The result shows that the implicit control invariant set induced by the backup policy is close to the maximum control invariant set under a properly chosen backup policy in many practical problems. The backup CBF is much more scalable than the benchmark methods with explicitly computed control invariant sets and is applicable to general nonlinear dynamics. Furthermore, since the backup CBF has relative degree 1 under mild assumptions, it is a better choice than the high-order CBFs that bear no feasibility guarantee.

\balance
\bibliographystyle{myieeetran}
\bibliography{mybib}

\begin{thebibliography}{10}
\providecommand{\url}[1]{#1}
\csname url@samestyle\endcsname
\providecommand{\newblock}{\relax}
\providecommand{\bibinfo}[2]{#2}
\providecommand{\BIBentrySTDinterwordspacing}{\spaceskip=0pt\relax}
\providecommand{\BIBentryALTinterwordstretchfactor}{4}
\providecommand{\BIBentryALTinterwordspacing}{\spaceskip=\fontdimen2\font plus
\BIBentryALTinterwordstretchfactor\fontdimen3\font minus
  \fontdimen4\font\relax}
\providecommand{\BIBforeignlanguage}[2]{{%
\expandafter\ifx\csname l@#1\endcsname\relax
\typeout{** WARNING: IEEEtran.bst: No hyphenation pattern has been}%
\typeout{** loaded for the language `#1'. Using the pattern for}%
\typeout{** the default language instead.}%
\else
\language=\csname l@#1\endcsname
\fi
#2}}
\providecommand{\BIBdecl}{\relax}
\BIBdecl

\bibitem{ames2014control}
A.~D. Ames, J.~W. Grizzle, and P.~Tabuada, ``Control barrier function based
  quadratic programs with application to adaptive cruise control,'' in
  \emph{53rd IEEE Conference on Decision and Control}.\hskip 1em plus 0.5em
  minus 0.4em\relax IEEE, 2014, pp. 6271--6278.

\bibitem{wieland2007constructive}
P.~Wieland and F.~Allg{\"o}wer, ``Constructive safety using control barrier
  functions,'' \emph{IFAC Proceedings Volumes}, vol.~40, no.~12, pp. 462--467,
  2007.

\bibitem{aubin2011viability}
J.-P. Aubin, A.~M. Bayen, and P.~Saint-Pierre, \emph{Viability theory: new
  directions}.\hskip 1em plus 0.5em minus 0.4em\relax Springer Science \&
  Business Media, 2011.

\bibitem{bertsekas1972infinite}
D.~Bertsekas, ``Infinite time reachability of state-space regions by using
  feedback control,'' \emph{IEEE Transactions on Automatic Control}, vol.~17,
  no.~5, pp. 604--613, 1972.

\bibitem{blanchini1999set}
F.~Blanchini, ``Set invariance in control,'' \emph{Automatica}, vol.~35,
  no.~11, pp. 1747--1767, 1999.

\bibitem{rakovic2004computation}
S.~Rakovic, P.~Grieder, M.~Kvasnica, D.~Mayne, and M.~Morari, ``Computation of
  invariant sets for piecewise affine discrete time systems subject to bounded
  disturbances,'' in \emph{2004 43rd IEEE Conference on Decision and Control
  (CDC)(IEEE Cat. No. 04CH37601)}, vol.~2.\hskip 1em plus 0.5em minus
  0.4em\relax IEEE, 2004, pp. 1418--1423.

\bibitem{chen2018data}
Y.~Chen, H.~Peng, J.~Grizzle, and N.~Ozay, ``Data-driven computation of minimal
  robust control invariant set,'' in \emph{2018 IEEE Conference on Decision and
  Control (CDC)}.\hskip 1em plus 0.5em minus 0.4em\relax IEEE, 2018, pp.
  4052--4058.

\bibitem{korda2014convex}
M.~Korda, D.~Henrion, and C.~N. Jones, ``Convex computation of the maximum
  controlled invariant set for polynomial control systems,'' \emph{SIAM Journal
  on Control and Optimization}, vol.~52, no.~5, pp. 2944--2969, 2014.

\bibitem{mitchell2005time}
I.~M. Mitchell, A.~M. Bayen, and C.~J. Tomlin, ``A time-dependent
  hamilton-jacobi formulation of reachable sets for continuous dynamic games,''
  \emph{IEEE Transactions on automatic control}, vol.~50, no.~7, pp. 947--957,
  2005.

\bibitem{taylor2020learning}
A.~Taylor, A.~Singletary, Y.~Yue, and A.~Ames, ``Learning for safety-critical
  control with control barrier functions,'' in \emph{Learning for Dynamics and
  Control}.\hskip 1em plus 0.5em minus 0.4em\relax PMLR, 2020, pp. 708--717.

\bibitem{lindemann2018control}
L.~Lindemann and D.~V. Dimarogonas, ``Control barrier functions for signal
  temporal logic tasks,'' \emph{IEEE control systems letters}, vol.~3, no.~1,
  pp. 96--101, 2018.

\bibitem{nguyen2016exponential}
Q.~Nguyen and K.~Sreenath, ``Exponential control barrier functions for
  enforcing high relative-degree safety-critical constraints,'' in \emph{2016
  American Control Conference (ACC)}.\hskip 1em plus 0.5em minus 0.4em\relax
  IEEE, 2016, pp. 322--328.

\bibitem{xu2018constrained}
X.~Xu, ``Constrained control of input--output linearizable systems using
  control sharing barrier functions,'' \emph{Automatica}, vol.~87, pp.
  195--201, 2018.

\bibitem{xiao2019control}
W.~Xiao and C.~Belta, ``Control barrier functions for systems with high
  relative degree,'' in \emph{2019 IEEE 58th Conference on Decision and Control
  (CDC)}.\hskip 1em plus 0.5em minus 0.4em\relax IEEE, 2019, pp. 474--479.

\bibitem{gurriet2020scalable}
T.~Gurriet, M.~Mote, A.~Singletary, P.~Nilsson, E.~Feron, and A.~D. Ames, ``A
  scalable safety critical control framework for nonlinear systems,''
  \emph{IEEE Access}, vol.~8, pp. 187\,249--187\,275, 2020.

\bibitem{seywald2003desensitized}
H.~Seywald and R.~Kumar, ``Desensitized optimal trajectories,''
  \emph{Analytical Mechanics Associates Rept}, pp. 03--16, 2003.

\bibitem{gurriet2018online}
T.~Gurriet, M.~Mote, A.~D. Ames, and E.~Feron, ``An online approach to active
  set invariance,'' in \emph{2018 IEEE Conference on Decision and Control
  (CDC)}.\hskip 1em plus 0.5em minus 0.4em\relax IEEE, 2018, pp. 3592--3599.

\bibitem{hermann1977nonlinear}
R.~Hermann and A.~Krener, ``Nonlinear controllability and observability,''
  \emph{IEEE Transactions on automatic control}, vol.~22, no.~5, pp. 728--740,
  1977.

\bibitem{xu2017correctness}
X.~Xu, J.~W. Grizzle, P.~Tabuada, and A.~D. Ames, ``Correctness guarantees for
  the composition of lane keeping and adaptive cruise control,'' \emph{IEEE
  Transactions on Automation Science and Engineering}, vol.~15, no.~3, pp.
  1216--1229, 2017.

\bibitem{chen2019enhancing}
Y.~Chen, A.~Hereid, H.~Peng, and J.~Grizzle, ``Enhancing the performance of a
  safe controller via supervised learning for truck lateral control,''
  \emph{Journal of Dynamic Systems, Measurement, and Control}, vol. 141,
  no.~10, 2019.

\bibitem{folkestad2020data}
C.~Folkestad, Y.~Chen, A.~D. Ames, and J.~W. Burdick, ``Data-driven
  safety-critical control: Synthesizing control barrier functions with koopman
  operators,'' \emph{IEEE Control Systems Letters}, 2020.

\end{thebibliography}
\end{document}